\renewenvironment{proof}{\noindent\textit{Proof}}{\hfill{$\Box$}}
\newcommand{\binomial}[2]{\left(\begin{array}{c}#1\\#2\end{array}\right)}
\newcommand{\Tr}{{\rm Tr}}
\newcommand{\ifc}{{\rm if \ }}
\newcommand{\boxtensor}{{\Box\kern-9.03pt\raise1.42pt\hbox{$\times$}}}
\newcommand{\sF}{{\mathcal F}}
\newcommand{\F}{{\mathbb{F}}}
\renewcommand{\P}{{\mathbb P}}
\newcommand{\Z}{{\mathbb Z}}
\newcommand{\be}{\begin{eqnarray}}
\newcommand{\ee}{\end{eqnarray}}
\newcommand{\nn}{{\nonumber}}
\newcommand{\dd}{\displaystyle}
\newcommand{\TODO}[1]
{\par\fbox{\UIKeyInputDownArrowbegin{minipage}{0.9\linewidth}\textbf{TODO:} #1\end{minipage}}\par}
\theoremstyle{plain}
\newtheorem{theorem}{Theorem}
\newtheorem{corollary}{Corollary}
\newtheorem{example}{Example}
\theoremstyle{definition}
\newtheorem{definition}{Definition}
\begin{document}
%
\title{Families of sequences with good family complexity and cross-correlation measure}
%
%
%

\author{Kenan~Doğan, \and
Murat~Şahin, \and
Oğuz~Yayla
\thanks{K.~ Doğan, PhD student, Mathematics Department, Ankara University, 06560, Yenimahalle, Ankara, Türkiye e-mail: knndogan@gmail.com.}%
\thanks{M.~ Şahin is with Graduate School of Natural and Applied Sciences, Ankara University, 06110, Altındağ, Ankara, Türkiye e-mail: msahin@ankara.edu.tr.}%
\thanks{O.~Yayla is with Institute of Applied
of Mathmatics, Middle East Technical University, 06800, Çankaya, Ankara, Türkiye e-mail: oguz@metu.edu.tr.}
}

\maketitle

\begin{abstract}
\noindent In this paper we study pseudorandomness of a family of sequences in terms of two measures, the family complexity ($f$-complexity) and the cross-correlation measure of order $\ell$. We consider sequences not only on binary alphabet but also on $k$-symbols ($k$-ary) alphabet. 
We first generalize some known methods on construction of the family of binary pseudorandom sequences. We prove a bound on the $f$-complexity of a large family of binary sequences of Legendre-symbols of certain irreducible polynomials.
We show that this family as well as its dual family have
both a large family complexity and a small cross-correlation measure up to a rather large order.  Next, we present another family of binary sequences 
having high $f$-complexity and low cross-correlation measure.
Then we extend the results to the family of sequences on $k$-symbols alphabet.\end{abstract}

\begin{IEEEkeywords}
\noindent pseudorandomness, binary sequences, family complexity, cross-correlation measure, Legendre sequence, polynomials over finite fields, $k$-symbols sequences.
\end{IEEEkeywords}

%
\IEEEpeerreviewmaketitle

\section{Introduction}

Pseudorandom sequence is a sequence of numbers generated deterministically and looks random. It is called binary ($k$-ary or $k$-symbol) sequence if its elements are in \{-1,+1\} (resp. $\{a_1,a_2, \ldots , a_k\}$ for some numbers $a_i$). 
Pseudorandom sequences have a lot of application areas such as telecommunication, cryptography, simulation see for example \cite{DP2010,GG2005,NA2015,TW2007}. 
According to its application area, the quality of a pseudorandom sequence is evaluated in many directions. There are statistical test packages 
(for example L'Ecuyer's TESTU01 \cite{l2007testu01}, Marsaglia's Diehard \cite{marsaglia1996diehard} or the NIST battery \cite{nist}) for evaluating the quality of a pseudorandom sequence. In addition to that, there are proven theoretical results on some randomness measures that a pseudorandom sequence needs to satisfy. For instance, linear complexity, (auto)correlation, discrepancy, well-distribution and others, see \cite{Gya2013,TW2007} and references therein.  
In some cases one needs many pseudorandom binary sequences, for instance in cryptographic applications. Therefore, their randomness has to be proven by several figures of merit, e.g.~family complexity, cross-correlation, collision, distance minimum, avalanche effect and others, see \cite{S2017} and references therein.  The typical values of some randomness measures of a truly random sequence are proven in \cite{alon2007measures,cassaigne2002finite,merai2016typical}. Sequences satisfying typical values are so called \textit{good} sequences. 

After Mauduit and S\'ark\"ozy \cite{MS1997} introduced a construction method of good binary sequences by using Legendre symbol, other construction methods have been given in the literature \cite{chen2008elliptic,chen2010structure,liu2007new}.
In
2004, Goubin, Mauduit and Sárközy \cite{GMS2004}  first constructed large families of
pseudorandom binary sequences. Later, many new constructions \cite{follath2008construction,gya2004,liu2007family,mauduit2004construction,mauduit2005construction,merai2012construction} and their complexity bounds \cite{merai2016cross,merai2015improving,sarkozy2009measures} were developed, see \cite{S2017} for others.

These pseudorandom measures defined for binary sequences have been extended  to sequences of $k$-symbols \cite{gergely2003finite,mauduit2002finite,toth2012extension}
and some constructions of good $k$-symbols sequences were given in \cite{ahlswede2006large,du2014pseudorandom,gomez2012multiplicative,liu2017large,mak2014more}.

Huaning Liu and Xi Liu \cite{LiuXi2024} constructed new family of binary sequences has both small cross-correlation measure and large family complexity.
 
In this paper we study two such measures the family complexity (in short $f$-complexity) and the cross-correlation measure of order $\ell$ of families of binary and $k$-ary sequences. 

Ahlswede et al.~\cite{AKMS2003} introduced the $f$-complexity as follows.

\begin{definition}
The \textit{$f$-complexity} $C(\sF)$ of a family $\sF$ of binary sequences $E_N \in \{-1,+1\}^N$ of length $N$ 
is the greatest integer $j \geq 0$ such that for any $1 \leq i_1 < i_2< \cdots < i_j \leq N$ and any $\epsilon_1,\epsilon_2, \ldots,  \epsilon_j \in \{-1,+1\}$ 
there is a sequence $E_N = (e_1,e_2,\ldots , e_N)\in \sF$ with $$e_{i_1}=\epsilon_1,e_{i_2}=\epsilon_2, \ldots ,e_{i_j}=\epsilon_j.$$
\end{definition}

We have the trivial upper bound 
\be \label{eqn.bound_f}
2^{C(\sF)} \leq |\sF|,
\ee
where $|\sF|=F$ denotes the size of the family $\sF$.

Gyarmati et al.~\cite{GMS2014} introduced the cross-correlation measure of order $\ell$.


\begin{definition} \label{def.ccm}
The \textit{cross-correlation measure of order $\ell$} of a family $\sF$ of binary sequences $E_{i,N} = (e_{i,1},e_{i,2},\ldots , e_{i,N}) \in \{-1+1\}^N$,  $i=1,2, \ldots , F$, 
is defined as 
$$
\Phi_\ell(\sF) = \max_{M,D,I}\left| \sum_{n=1}^{M}{e_{i_1,n+d_1} \cdots e_{i_\ell,n+d_\ell}}\right|,
$$
where $D$ denotes an $\ell$ tuple $(d_1,d_2,\ldots , d_\ell)$ of integers such that $0 \leq d_1 \leq d_2 \leq \cdots \leq d_\ell < M+d_\ell \leq N$ and $d_i \neq d_j$ 
if $E_{i,N} = E_{j,N}$ for $i \neq j$, and $I$ denotes an $\ell$ tuple $(i_1,i_2, \ldots , i_\ell)\in\{1,2,\ldots ,F\}^\ell$.
\end{definition}

For a family $\sF$ of binary sequences of length $N$ with $|\sF| < 2^{N/12}$, the expected value of the cross-correlation measure of order $\ell \leq N/(6\log_2|\sF|)$  is $\Phi_\ell(\sF) \approx \left(N\log \binomial{N}{\ell} + \ell \log{|\sF|}\right)^{1/2}$, see \cite{merai2016typical}.

We use the notation $\Phi^\circ_\ell$ for the cross correlation $\Phi_\ell$ evaluated for fixed $M = F$ and $d_i = 0$ for all $i \in \{1,2,\ldots,l\}$.

Winterhof and the author in \cite{WY2014} proved  the estimation of the $f$-complexity  $C(\sF)$ of a family $\sF$
of binary sequences 
$$E_{i,N} = (e_{i,1},e_{i,2},\ldots , e_{i,N}) \in \{-1+1\}^N, \quad i=1,\ldots,F,$$
in terms of the cross-correlation measure $\Phi_{\ell}(\overline{\sF}), \ell \in \{1,2,\ldots,\log_2{F}\}$ of the {\em dual family} $\overline{\sF}$ of binary sequences 
$$E_{i,F} = (e_{1,i},e_{2,i},\ldots , e_{F,i}) \in \{-1+1\}^{F}, \quad i=1,\ldots,N$$ 
as follows
\be \label{eq:f_ccm}
C(\sF) \geq \left\lceil \log_2{F} - \log_2{\max_{1 \leq i \leq \log_2{F}}{\Phi_{i}(\overline{\sF})}} \right\rceil -1.
\ee

In Section~\ref{sec.example} we generalize the construction of the family of binary pseudorandom sequences presented in \cite{WY2014}. We prove a bound on the $f$-complexity of the family of binary sequences of Legendre-symbols of irreducible polynomials 
$f_i(x) = x^d + a_2i^2x^{d-2} + a_3i^{3}x^{d-3} + \cdots + a_{d-2}i^{d-2}x^{2} + a_di^d$ defined as
$$\sF_1=\left\{\left(\frac{f_i(n)}{p}\right)_{n=1}^{p-1} : i=1,\ldots,p-1 \right\}.$$ 
We show that this family as well as its dual family have
both a large family complexity and a small cross-correlation measure up to a rather large order. 

In Section \ref{sec.new_family} we study another family of binary sequences 
$${\sF_2}=\left\{ \left(\frac{f(n)}{p}\right)_{n=1}^{p-1} : f \mbox{ is irreducible of degree } d \mbox{ over } \F_p\right\}$$
for a positive integer $d$.
We prove that $\sF_2$ has high $f$-complexity and low cross-correlation measure by using \eqref{eq:f_ccm}.
We note that the roots of an irreducible polynomial are called to be \textit{conjugate} to each other.

In Section \ref{sec.ccm_k} we extend the relation \eqref{eq:f_ccm} to the family of sequences on $k$-symbols alphabet. Finally, we show in Section \ref{sec.new_family_k} that extension of the family $\sF_2$ to $k$-symbols alphabet has also  high $f$-complexity and low cross-correlation measure.

Throughout the paper, the notations $U \ll V$ and $U = O(V)$ are equivalent to the statement that $\vert U \vert \leq cV$ holds with some positive constant $c$. Moreover, the notation $f(n) = o(1)$ is equivalent to $\lim_{n \to \infty}{f(n)} = 0.$ 

\iffalse
\section{A relation between family complexity and cross-correlation measure}\label{sec.ccm}
The authors \cite{WY2014} recently proved the following relationship between the $f$-complexity of a family of binary sequences and the cross-correlation measure of its dual family. For completeness we give the sketch of the proof.

\begin{theorem}
\label{thm.f_ccm}
Let $\sF$ be a family of binary sequences $(e_{k,1},e_{k,2},\ldots ,e_{k,N}) \in \{-1,+1\}^N$ for $k=1,2,\ldots ,F$ and $\overline{\sF}$ its \textit{dual family} of binary sequences 
$(e_{1,n},e_{2,n},\ldots ,e_{F,n}) \in \{-1,+1\}^{F}$ for $n=1,2,\ldots ,N$. Then we have
$$
C(\sF) \geq \left\lceil \log_2{F} - \log_2{\max_{1 \leq i \leq \log_2{F}}{\Phi_{i}(\overline{\sF})}} \right\rceil -1,
$$
where $\log_2$ denotes the binary logarithm.
\end{theorem}
\begin{proof}
Assume that a specification 
\be \label{eqn.spec}
e_{k,n_1}=\epsilon_1,e_{k,n_2}=\epsilon_2,\ldots ,e_{k,n_j}=\epsilon_j
\ee
occurs in the family $\sF$ for some $k \in \{1,2,\ldots ,F\}$.
We will prove a bound on the integer $j$ in (\ref{eqn.spec}). 
We know that
$$\frac{1}{2}(1+\epsilon_ie_{k,n_i}) = \left\lbrace \begin{array}{ll}1 & \ifc e_{k,n_i}=\epsilon_i,\\
0 & \ifc e_{k,n_i}=-\epsilon_i.
\end{array}
\right.$$
Then the number $A$ of sequences in $\sF$ satisfying (\ref{eqn.spec}) equals 
\be \nn
\begin{array}{lll}
A &=& \dd \sum_{k=1}^{F}{\frac{1}{2^j} \prod_{i=1}^{j}{(1+\epsilon_ie_{k,n_i})}}\\[.5cm]
&=&\dd \frac{1}{2^j}\sum_{k=1}^{F}{\left[1+ \sum_{{\ell}=1}^{j}{\sum_{1 \leq  i_1 < \cdots < i_{\ell} \leq j}{\epsilon_{i_1}\cdots \epsilon_{i_{\ell}}e_{k,n_{i_1}} \cdots e_{k,n_{i_{\ell}}}}}\right]}\\[.5cm]&=&\dd \frac{1}{2^j}\left[F+ \sum_{{\ell}=1}^{j}{\sum_{1 \leq i_1 < \cdots < i_{\ell} \leq j}{\epsilon_{i_1}\cdots \epsilon_{i_{\ell}} \sum_{k=1}^{F}{e_{k,n_{i_1}} \cdots e_{k,n_{i_{\ell}}}}}} \right]\\[.5cm]
&\geq &\dd \frac{1}{2^j}\left[F- \sum_{{\ell}=1}^{j}{\sum_{1 \leq i_1 < \cdots < i_{\ell} \leq j}{\left|\sum_{k=1}^{F}{e_{k,n_{i_1}} \cdots e_{k,n_{i_{\ell}}}}\right|}} \right]\\[.5cm]
&\geq &
\dd \frac{1}{2^j}\left[F- \sum_{{\ell}=1}^{j}{\binomial{j}{{\ell}}{\Phi_{\ell}(\overline{\sF})}} \right].
\end{array}
\ee

then there exists at least one sequence in $\sF$ satisfying (\ref{eqn.spec}). 

By (\ref{eqn.bound_f}) we may assume $j \leq \log_2F$, and so we get
\be \nn
\begin{array}{lll}
\dd \sum\limits_{\ell=1}^{j}{\binomial{j}{{\ell}}{\Phi_{\ell}(\overline{\sF})}} & \leq &   
 2^j \max\limits_{1 \leq i \leq \log_2 F}\Phi_{i}(\overline{\sF}).
\end{array}
\ee
Therefore for all integers $j \geq 0$ satisfying $$ j < \log_2{F} - \log_2{\max_{1 \leq \ell \leq \log_2{F}}{\Phi_{\ell}(\overline{\sF})}}$$
we have $A > 0$ which completes the proof.
\end{proof}

We note that in some cases it is not easy to find a bound for $\Phi_{i}(\overline{\sF})$. For these cases one can replace $\Phi_{i}(\overline{\sF})$ in Theorem \ref{thm.f_ccm} with $\Phi^\circ_{i}(\overline{\sF})$ for simplicity. 
\fi

\section{A family and its dual with bounded cross-correlation and family complexity measures}\label{sec.example}
In this section we present an extension of the family of sequences given in \cite{WY2014}, where a family of sequences of Legendre symbols with irreducible quadratic polynomials and its dual family were given. It was shown that both families have high family complexity and small cross-correlation
measure up to a large order $\ell$. Namely, for $p>2$ a prime and $b$ a quadratic nonresidue modulo $p$ they study the following family $\sF$ and its dual family $\overline{\sF}$:
 $${\sF}=\left\{ \left(\frac{n^2-bi^2}{p}\right)_{i=1}^{(p-1)/2} : n=1,\ldots,(p-1)/2\right\},$$
and they show
\begin{equation}\label{Ql}
 \Phi_k({\sF})\ll k p^{1/2}\log p\quad\mbox{ and }\quad
\Phi_k(\overline{\sF})\ll k p^{1/2}\log p
\end{equation}
for each integer $k=1,2,\ldots$ 
Then \eqref{eq:f_ccm} immediately implies
$$C({\sF})\ge \left(\frac{1}{2}-o(1)\right) \frac{\log p}{\log 2} \quad \mbox{and}\quad
C(\overline{\sF})\ge \left(\frac{1}{2}-o(1)\right) \frac{\log p}{\log 2}.$$

We now present an extension of this result to higher degree polynomials over prime finite fields. Note that for these families we also get analog bounds for their duals.

Let $p > 2$ be a prime number, $d \geq 5$ and $\Omega_{p,d}$ be a set of irreducible polynomials over $\F_p$ of degree $d$ defined  as 
$$\Omega_{p,d} =\{ f(x) = x^d + a_2x^{d-2} + a_3x^{d-3} + \cdots + a_{d-2}x^{2} + a_d \in \F_p[x], a_2,a_3 \neq 0\}.$$

\begin{theorem} \label{thm.ext}
 Let $\sF_f$ be a family of binary sequences for some $f \in \Omega_{p,d}$ defined as 
$${\sF_f}=\left\{ \left(\frac{f_i(n)}{p}\right)_{n=1}^{p-1} : i=1,\ldots,p-1\right\},$$
where  $f_i(X) = i^df(X/i)$
for  $i \in \{1,2,\ldots , p-1 \}$ and $p \nmid d$. Let $\overline{\sF_f}$ be the dual of $\sF_f$.
Then we have
 $$\Phi_k({\sF_f})\ll dk p^{1/2}\log p  \mbox{ and }  \Phi_k(\overline{\sF_f})\ll dk p^{1/2}\log p$$
for each integer $k \in \{1,2,\ldots , p-1\}$ and
$$C({\sF_f})\ge \left(\frac{1}{2}-o(1)\right) \frac{\log(p/d^2)}{\log 2}  \mbox{ and }  C(\overline{\sF_f})\ge \left(\frac{1}{2}-o(1)\right) \frac{\log(p/d^2)}{\log 2}$$
\end{theorem}
\begin{proof} Since otherwise the crosscorrelation values, bounded by  $p-1$ the length of the sequences, become greater than $p$; we may assume $d < p^{1/2}/2$. We note that $f(X,Y) = f_Y(X)$  is an homogeneous polynomial of degree $d$. Thus it is enough to choose an irreducible polynomial $$f(X) = X^d + a_2X^{d-2} + a_3X^{d-3} + \cdots + a_{d-2}X^{2} + a_d \in \F_p[X]$$
such that $a_2,a_3\not\equiv 0 \mod p$. It is clear that each $f_i$ is irreducible for $i \in \{1,2,\ldots p-1\}$ whenever $f(X)$ is irreducible.

According to Definition \ref{def.ccm} we need to estimate %
$$
\dd \left \lvert\sum_{n=1}^M{\left( \frac{f_{i_1}(n+d_1)}{p} \right) \cdots \left( \frac{f_{i_k}(n+d_k)}{p}  \right)}\right \rvert = \dd \left \lvert\sum_{n=1}^M{\left( \frac{f_{i_1}(n+d_1) \cdots f_{i_k}(n+d_k)}{p}  \right)}\right \rvert.
$$
We will first show that $$h(X):= f_{i_1}(X+d_1) \cdots f_{i_k}(X+d_k) $$ is a monic square-free polynomial and then apply Weil's Theorem, see \cite{iwko,ti,wi}. Since each $f_{i_j},\ j =1,2,\ldots ,k$ is an irreducible polynomial it is enough to show that they are distinct from each other. Assume that $f_{i_j}(X+d_j) = f_{i_\ell}(X+d_\ell)$ for some $j,\ell =1,2,\ldots ,k$. Then by comparing the coefficients  of the term $X^{d-1}$ we have $d_j = d_\ell$ since $p \nmid d$. Hence  we have the equality $f_{i_j}(X) = f_{i_\ell}(X)$. But then by comparing the coefficients of the terms $X^{d-2}$ and $X^{d-3}$ we have 
$$a_{2}i_j^2 = a_{2}i_\ell^2 \mbox{ and } a_{3}i_j^3 = a_{3}i_\ell^3.$$
Since $a_{2}$ and $a_{3}$ are non zero we have
$$i_j^2 = i_\ell^2 \mbox{ and } i_j^3 = i_\ell^3.$$
This implies that $i_j = i_\ell$, a contradiction. Therefore $h$ is a square-free polynomial. Since the degree of $h(x)$ is $dk$ then the following holds 
$$\Phi_k({\sF_f})\ll dk p^{1/2}\log p.$$ Similarly we can show that  
$$\Phi_k(\overline{\sF_f})\ll dk p^{1/2}\log p$$ for $k \in \{1,2,\ldots , p-1\}$.
Next we use \eqref{eq:f_ccm} to get the bounds on the family complexity.
\be\nn
\begin{array}{lcl}
C(\sF_f) &\geq & \dd \log_2\frac{F}{\max_{1 \leq \ell \leq \log_2{F}}{\Phi^\circ_{\ell}(\overline{\sF_f})}}\\[.4cm]
&\geq& \dd \log_2\frac{p-1}{d \,k\, p^{1/2}\, \log p} 
\\[.4cm]

&=& \dd \log_2\frac{p-1}{d \,(\log_2 p)\, p^{1/2}\, \log p} \\

&\geq& \dd \log_2\frac{p^{1/2}}{d \,(\log_2 p)\, \log p} \\

&\geq& \dd \log_2\frac{p^{1/2}}{d} - \log_2{log^2 p} \\

&\geq& \dd \frac{1}{2}\log_2\frac{p}{d^2} - \log_2{log^2 p} \\

& \geq & \dd (\frac{1}{2}-o(1))\frac{\log{p}/d^2}{\log 2}

\end{array}
\ee

\end{proof}

\begin{example}
    Let $d=5$ and $p=11$. The polynomial $f=x^5 + x^3 + 2x^2 + 3$ is in the set $\Omega_{11,5}$. The irreducible polynomials generated by $f_i(X) = i^5 f(X/i)$ for  $i \in \{1,2,\ldots , 10 \}$ are\\
    $f_1(x)=x^5 + x^3 + 2x^2 + 3$, $f_2(x)=x^5 + 4x^3 + 5x^2 + 8$, $f_3(x)=x^5 + 9x^3 + 10x^2 + 3$, $f_4(x)=x^5 + 5x^3 + 7x^2 + 3$, $f_5(x)=x^5 + 3x^3 + 8x^2 + 3$, $f_6(x)=x^5 + 3x^3 + 3x^2 + 8$, $f_7(x)=x^5 + 5x^3 + 4x^2 + 8$, $f_8(x)=x^5 + 9x^3 + x^2 + 8$, $f_9(x)=x^5 + 4x^3 + 6x^2 + 3$, $f_{10}(x)=x^5 + x^3 + 9x^2 + 8$.\\
    \indent The sequences generated by the polynomials above are $[[-1, -1, 1, 1, 1, 1, 1, 1, 1, 1]$, $[-1, 1, -1, 1, -1, \\ -1, -1, -1, -1, -1]$, $[1, 1, -1, 1, 1, -1, 1, 1, 1, 1]$, $[1, 1, 1, -1, 1, 1, 1, -1, 1, 1]$, $[1, 1, 1, 1, -1, 1, 1, 1, 1, -1]$,\\ $[1, -1, -1, -1, -1, 1, -1, -1, -1, -1]$, $[-1, -1, 1, -1, -1, -1, 1, -1, -1, -1]$, $[-1, -1, -1, -1, 1, -1,\\ -1, 1, -1, -1]$, $[1, 1, 1, 1, 1, 1, -1, 1, -1, 1]$, $[-1, -1, -1, -1, -1, -1, -1, -1, 1, 1]]$ \\
    \indent This sequence family does not have the complexity 3, since the tuples $\{[1, 1, 1]$, $[1, 1, -1]$, $[1, -1, 1]$, $[1, -1, -1]$, $[-1, 1, 1]$, $[-1, 1, -1]$, $[-1, -1, 1]$, $[-1, -1, -1]\}$ in the vector space $\mathbb F_{2}^3$ are not in the all possible 3 tuples of the sequence family.\\
    \indent For example,  the first three bits of sequence family are $[[-1, -1, 1],$ $[-1, 1, -1],$ $[1, 1, -1],$ $[1, 1, 1],$ $[1, 1, 1],$ $[1, -1, -1],$ $[-1, -1, 1],$ $[-1, -1, -1],$ $[1, 1, 1],$ $[-1, -1, -1]]$. However, this multi list does not contain all the vectors in $\mathbb F_{2}^3$.\\
    \indent On the other hand, the complexity-2 is satisfied. \\
    \indent The cross-correlation measure of order 5 of the family gets the maximum value 10 with M=10, I=[2,6,7,8,10] and D=[0,0,0,0,0]. \\
    \indent The dual family $\overline{\sF_f}$ gets the cross-correlation measure 9 of order 5 with I=[3,4,6,9,10] and D=[0,0,0,1,1]. The complexity of the dual family is 1. 
    
\end{example}


We note that each two sequences in $\sF_f$ (resp.~$\overline{\sF_f}$) given in Theorem \ref{thm.ext} are distinct as  $\Phi_2({\sF_f}) < p$ (resp.~$\Phi_2(\overline{\sF_f})<p$).
Hence, we have the family size $|\sF_f| = p$ for the family. 
In the next result, we give  an upper bound on  the number $\#\{\sF_f \vert f \in \Omega_{p,n}\}$ of distinct families. This result is a direct consequence from the paper \cite{ccakirouglu2022number}. Before that we will give some notation. Let $C_{\alpha}:x(y^p+y)= \alpha(x^2+1)$ be curves over $\mathbb F_p$ for $\alpha \in \mathbb F_p^\times$. Let $\#C_\alpha(\mathbb F_{p^n})$ denote the number of  points $(x,y) \in \F_{p^n}$ on $C_\alpha$ and define $S_\alpha(\mathbb F_{p^n}):=\#C_\alpha(\mathbb F_{p^n})-(p^n+1)$. 
 Let $\mu$ denote the Möbius function and $[p \text{ divides }n]$ denote its truth value, i.e., $[p \text{ divides }n]:=1$ if $p \text{ divides }n$, and $[p \text{ divides }n]:=0$ otherwise.
 
\begin{corollary} \label{cor:seq}
  \[\#\{\sF_f \vert f \in \Omega_{p,n}\} < \frac1n\sum_{d\mid n, p\nmid d} \mu(d)\left(F_p(n/d)-[p \text{ divides }n]p^{n/pd}\right),\] where $$ F_p(n) = p^{n-2}+\frac{(p-1)^2}{p^2}+\frac1{p^2}\sum_{\alpha\in\mathbb F_p^\times}S_{\alpha}(\mathbb F_{p^n}).$$
\end{corollary}
\begin{proof}
    The family $\sF$ is constructed by using irreducible polynomials $f \in \Omega_{p,n}$. 
    By \cite[Theorem 1]{ccakirouglu2022number},  we get the  number of irreducible polynomials in terms of $F_p(n)$. Then, \cite[Theorem 5]{ccakirouglu2022number} gives the result.
\end{proof}

\section{A large family of sequences with low cross correlation and high family complexity} \label{sec.new_family}
Now we construct a larger family with both small cross-correlation measure and high $f$-complexity. However, for these families of sequences we cannot say anything about their duals. 

\begin{theorem} \label{thm.ext_2}
Let $p > 2$ be a prime number, $d \in \Z^+$ and $p \nmid d$. Let $\Omega_d$ be the set defined as 
$$ \Omega_d = \{ f(X) = X^d +  a_{2}X^{d-2}+ \cdots + a_d \in \F_p[X] : f \mbox{ is irreducible over } \F_p\}.$$
Let a family $\sF_d$ of binary sequences be defined as 
$${\sF_d}=\left\{ \left(\frac{f(n)}{p}\right)_{n=1}^{p-1} : f \in \Omega_d \right\}.$$
Then,
\be \label{eqn.f_ext} 
C({\sF_d})\ge \dd (\frac{1}{2}-o(1))\frac{\log{p^{d-2}}/d^2}{\log 2}.
\ee The family size equals  $$ |\sF_d| = \frac{p^{d-1}}{d} - O(p^{\lfloor d/2 \rfloor}) $$ for $3 \leq d < p^{1/2}/2$.
\end{theorem}
\begin{proof}
It is clear by the Weil bound that each irreducible polynomial generates a distinct sequence in $\sF$. Yucas \cite{Yucas2006} proved that number of irreducible polynomials over $\F_p$ of degree $d$ with $p \nmid d$ and trace zero equals $$\frac{1}{dp}\sum_{t \mid d} {\mu(t) p^{d/t}}.$$ 
By doing calculations
$$\sum_{i=1}^{d/2}{p^i}=p(\frac{p^{d/2}-1}{p-1})=\frac{p}{p-1}p^{d/2}-\frac{p}{p-1}$$
With smallest p=3, we see that
$$\frac{1}{dp}\sum_{t \mid d} {\mu(t) p^{d/t}} \geq \frac{p^d}{dp} - \sum_{i=1}^{\lfloor d/2 \rfloor}{p^i} \geq \frac{p^{d-1}}{d} - \frac{3}{2}p^{\lfloor d/2 \rfloor}.$$
 Hence,$$ |\sF_d| = \frac{p^{d-1}}{d} - O(p^{\lfloor d/2 \rfloor}) $$ and we have proved the size of family.

Next, we prove the bound on family complexity by using \eqref{eq:f_ccm} with $\Phi^\circ_{k}$ instead $\Phi_{k}$. Because estimating $\Phi^\circ_{k}$ is easier in this case. 
In order to calculate $\Phi^\circ_{k}(\overline{\sF})$ we need to estimate 
\be \nn
V = \dd \left \lvert\sum_{f \in \Omega_d}{\left( \frac{f(i_1)}{p} \right) \cdots \left( \frac{f(i_k)}{p}  \right)}\right \rvert, \quad 1\leq i_1 < i_2 < \cdots <i_k \leq p-1.
\ee
Note that $f(X) \in \Omega_d$ if and only if $f(X) = (X - \beta)(X - \beta^p) \cdots  (X - \beta^{p^{d-1}})$ for some $\beta \in \F_{p^d}$ with $\Tr(\beta) = 0$ and $\beta \not\in \F_{p^t}$ for any $t \mid d, t <d$. Hence we rewrite $V$ as
\be \nn
\begin{array}{lcl}
V&=&  \dd \frac{1}{d} \left \lvert\sum_{\substack{\beta \in \F_{p^d} \\  \F_{p^d} = \F_p(\beta) \\ \Tr(\beta)=0}}{\left( \frac{ (i_1 - \beta)(i_1 - \beta^p) \cdots  (i_1 - \beta^{p^{d-1}})}{p} \right) \cdots \left( \frac{ (i_k - \beta)(i_k - \beta^p) \cdots  (i_k - \beta^{p^{d-1}})}{p}  \right)}\right \rvert \\[.5cm]
&=&  \dd \frac{1}{d} \left \lvert\sum_{\substack{\beta \in \F_{p^d} \\ \F_{p^d} = \F_p(\beta)  \\ \Tr(\beta)=0}}{\left( \frac{ N(i_1 - \beta)}{p} \right) \cdots \left( \frac{ N(i_k - \beta)}{p}  \right)}\right \rvert, \\[.5cm]
\end{array}
\ee
where $N$ is the norm function from $\F_{p^d}$ to $\F_p$. We note that $\chi(\alpha) = \left(\frac{N(\alpha)}{p} \right)$ is the quadratic character of $\F_{p^d}$\ and the number of elements $\alpha \in \F_{p^t},\ t \mid d$ and  $t <d$ but $\alpha \not\in \F_{p^d}$ is at most 
$$\sum_{t \mid d, t <d} {p^{t}}  \leq \frac{3}{2} p^{\lfloor d/2 \rfloor}.$$
Thus we can estimate $V$ as follows:
\be \nn
\begin{array}{lcl}
V&\leq &  \dd \frac{1}{d} \left \lvert\sum_{\substack{\beta \in \F_{p^d} \\  \Tr(\beta)=0}}{\chi((i_1 - \beta) \cdots (i_k - \beta)) }\right \rvert + O(p^{d/2}/d)\\[.6cm]
&\leq & \dd \frac{1}{dp} \left \lvert\sum_{\alpha \in \F_{p^d}}{ \chi((i_1 - \alpha^p + \alpha) \cdots (i_k - \alpha^p + \alpha)) }\right \rvert + O(p^{d/2}/d)\\[.6cm]
&\leq & \dd \frac{pk\,p^{d/2}\log p }{dp} + O(p^{d/2}/d)\quad\quad\quad \text{ (degree of the polynomial in $\chi$ is  $pk$)}\\
& = & \dd k\,p^{d/2}/d \log p + O(p^{d/2}/d).
\end{array}
\ee
The last inequality follows by Weil's Theorem \cite{weil}.  By (1) $2^{C(\sF)} \leq |\sF|=F$ and since $k=C(\sF)$ we get $ k\leq log_2{F}$.
Therefore by using \eqref{eq:f_ccm} we have
\be \nn
\begin{array}{lcl}
C(\sF_d) &\geq & \dd \log_2\frac{F}{\max_{1 \leq \ell \leq \log_2{F}}{\Phi^\circ_{\ell}(\overline{\sF_d})}}\\[.4cm]
&\geq&\dd \log_2\frac{\frac{p^{d-1}}{d} - O(p^{\lfloor d/2 \rfloor})}{k\frac{p^{d/2}}{d}\log p+O(p^{d/2}/d)} 
\\[.4cm]
& = & \dd \log_2\frac{\frac{p^{d-1}}{d} - O(p^{\lfloor d/2 \rfloor})}{\frac{1}{d}{(\log_2{F})}\,{p^{d/2}}\log p + O(p^{d/2}/d)} \quad \quad \quad \quad \text{ by (1)}
\\[.4cm] 
& = & \dd \log_2\frac{p^{d/2}(\frac{p^{d/2-1}} {d} - c_1)}{\frac{1}{d}\log_2(\frac{p^{d-1}}{d} - O(p^{\lfloor d/2 \rfloor}))\,p^{d/2}\log p+O(p^{d/2}/d)}\\[.4cm] 

& \geq & \dd \log_2\frac{\frac{p^{d/2-1}} {d} - c_1}{\frac{1}{d}\log_2(\frac{p^{d-1}}{d} - O(p^{\lfloor d/2 \rfloor}))\,\log p+c_2}\\[.4cm] 

& \geq & \dd \log_2\frac{\frac{p^{d/2-1}} {d} - c_1}{\frac{1}{d}(\log_2p^d)\log p + c_2}\\[.4cm] 

& = & \dd \log_2\frac{\frac{p^{d/2-1}} {d} - c_1}{(\log_2p)\log p + c_2}\\[.4cm] 

& \geq & \dd  \frac{1}{2}log_2{(\frac{p^{d/2-1}} {d} - c_1)^2} - \log_2({log^2 p} + c_2)\\[.4cm] 

& \geq & \dd  \frac{1}{2}log_2{(\frac{p^{d-2}} {d^2} - 2\,c_1\,\frac{p^{d/2-1}} {d} +{c_1}^2)} - \log_2({log^2 p} + c_2)\\[.4cm] 

& \geq & \dd (\frac{1}{2}-o(1))\frac{\log{p^{d-2}}/d^2}{\log 2}
\end{array}
\ee
\end{proof}

\begin{example}
Let $p=11$ and $d=5$.
$\Omega_5=\{ f(x)=x^5 + a_2 x^3 + a_3 x^2 + a_4 x + a_5 \in \mathbb{F}_{11}[x] : f \mbox{ is irreducible over }\mathbb{F}_{11} \}$. This family consists of 2640 irreducible polynomials. The f-complexity of this family is 8.  The cross-correlation measure of order 5 of the family gets the maximum value 10 with M=10, I=[2573, 244, 2118, 1629, 740] and D=[0,0,0,0,0]. \\
\end{example}

Gyarmati et.~al.~proved that the cross-correlation measure of the family given in Theorem \ref{thm.ext_2} is small and satisfies
\be \nn \label{eqn.ccm_ext} 
\Phi_k({\sF_d})\ll kdp^{1/2}\log p
\ee
for each integer $k \in \{1,2,3,\ldots , p-1\}$, see \cite[Theorem 8.14]{GMS2014}.

\section{Sequences on $k$-symbols alphabet}
\label{sec.ccm_k}
In \cite{mauduit2002finite} the correlation measure of a sequence consisting of symbols $\{a_1,a_2, \ldots , a_k\}$ is defined. We similarly extend the definition of cross-correlation measure for a family of sequences consisting of $k$-symbols in the following.

\begin{definition} \label{def.ccm_k}
The \textit{cross-correlation measure of order $\ell$} of a family $\sF$ of sequences \linebreak $E_{i,N} = (e_{i,1},e_{i,2},\ldots , e_{i,N}) \in \{a_1,a_2, \ldots , a_k\}^N$,  $i=1,2, \ldots , F$, 
is defined as 
$$
\gamma_\ell(\sF) = \max_{W,M,D,I}\left| g(\sF,W,M,D,I) - \frac{M}{k^\ell} \right|
$$
for 
$$g(\sF,W,M,D,I) = | \{ n : 1 \leq n \leq M, (e_{i_1,n+d_1}, \ldots, e_{i_\ell,n+d_\ell}) = W\} |$$ where $W \in \{a_1,a_2, \ldots , a_k\}^\ell$,
$D$ denotes an $\ell$ tuple $(d_1,d_2,\ldots , d_\ell)$ of integers such that $0 \leq d_1 \leq d_2 \leq \cdots \leq d_\ell < M+d_\ell \leq N$ and $d_i \neq d_j$ 
if $E_{i,N} = E_{j,N}$ for $i \neq j$, and $I$ denotes an $\ell$ tuple $(i_1,i_2, \ldots , i_\ell)\in\{1,2,\ldots ,F\}^\ell$.
\end{definition}

The definition of $f$-complexity C($\sF$) for a family $\sF$ of binary sequences can be directly generalized to a family of sequences consisting of $k$-symbols. 

\begin{definition}
The \textit{$f$-complexity} $C(\sF)$ of a family $\sF$ of $k$-symbol sequences $E_N \in \{a_1,a_2, \ldots , a_k\}^N$ of length $N$ 
is the greatest integer $j \geq 0$ such that for any $1 \leq i_1 < i_2< \cdots < i_j \leq N$ and any $\epsilon_1,\epsilon_2, \ldots,  \epsilon_j \in \{a_1,a_2, \ldots , a_k\}$ 
there is a sequence $E_N = (e_1,e_2,\ldots , e_N)\in \sF$ with $$e_{i_1}=\epsilon_1,e_{i_2}=\epsilon_2, \ldots ,e_{i_j}=\epsilon_j.$$
\end{definition}

Now we prove the following extension of \eqref{eq:f_ccm}.

\begin{theorem}
\label{thm.f_ccm_k}
Let $\sF$ be a family of sequences $(e_{i,1},e_{i,2},\ldots ,e_{i,N}) \in \{a_1,a_2, \ldots , a_k\}^N$ for $i=1,2,\ldots ,F$ and $\overline{\sF}$ its dual family of binary sequences 
$(e_{1,n},e_{2,n},\ldots ,e_{F,n}) \in \{a_1,a_2, \ldots , a_k\}^{F}$ for $n=1,2,\ldots ,N$. Then we have
\be \label{bound.k_g}
C(\sF) \geq \left\lceil \log_k{F} - \log_2{\max_{1 \leq i \leq \log_k{F}}{\gamma_{i}(\overline{\sF})}} \right\rceil -1
\ee
and 
\be \label{bound.k_G}
C(\sF) \geq \left\lceil \log_k{F} - \log_2{\max_{1 \leq i \leq \log_k{F}}{\Gamma_{i}(\overline{\sF})}} \right\rceil -1,
\ee
where $\log_k$ denotes the base $k$ logarithm.
\end{theorem}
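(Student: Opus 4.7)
The approach is to adapt the counting argument used to prove \eqref{eq:f_ccm} in the binary case (cf.\ the sketch for Theorem~\ref{thm.f_ccm} above) to the $k$-ary alphabet, handling the two cross-correlation measures of Definitions~\ref{def.ccm_k} and~\ref{def.ccm_k_alt} separately. Fix a specification
$$
e_{i,n_1}=\epsilon_1,\ \ldots,\ e_{i,n_j}=\epsilon_j,\qquad \epsilon_s\in\{a_1,\ldots,a_k\},\ 1\le n_1<\cdots<n_j\le N,
$$
and let $A$ denote the number of indices $i\in\{1,\ldots,F\}$ for which this specification is realised in $\sF$. The aim is to obtain a lower bound on $A$ depending only on $F$ and on a cross-correlation quantity of $\overline{\sF}$, so that $A>0$ forces $j$ below the claimed threshold and hence $C(\sF)\ge j$.

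For \eqref{bound.k_g} I would observe that the $i$-th position of the $n$-th dual sequence is exactly $e_{i,n}$, so that $A$ coincides with the count $g(\overline{\sF},W,F,\vec 0,I)$ of Definition~\ref{def.ccm_k} taken with $W=(\epsilon_1,\ldots,\epsilon_j)$, $I=(n_1,\ldots,n_j)$, $M=F$, and zero shifts. A short reduction handles the case in which the dual sequences indexed by the $n_s$ fail to be pairwise distinct: either some forced equality of the specification values makes $A=0$ trivially, or a redundant constraint may be dropped and $j$ decreased. Definition~\ref{def.ccm_k} then yields
$$
A \ \ge\ \frac{F}{k^{j}}-\gamma_j(\overline{\sF}),
$$
so $A>0$ follows as soon as $k^{j}\,\max_{1\le i\le\log_kF}\gamma_i(\overline{\sF})<F$. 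Taking logarithms and rewriting in the form of the statement produces \eqref{bound.k_g}.

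For \eqref{bound.k_G} I would replace the binary indicator $\tfrac12(1+\epsilon e)$ by the $k$-th root of unity identity
$$
\frac{1}{k}\sum_{t=0}^{k-1}\bigl(\phi(e)\,\overline{\phi(\epsilon)}\bigr)^{t}=\begin{cases}1,&e=\epsilon,\\ 0,&e\ne\epsilon,\end{cases}\qquad \phi\in\P.
$$
Writing $A$ as a sum over $i$ of the product of $j$ such indicators, expanding and isolating the diagonal term $(t_1,\ldots,t_j)=\vec 0$ gives the main contribution $F/k^{j}$, while every remaining inner sum $\sum_{i=1}^{F}\prod_{s}\phi_s(e_{i,n_s})^{t_s}$ is a cross-correlation of the dual family of order $\ell=|\{s:t_s\ne 0\}|$, bounded by $\Gamma_\ell(\overline{\sF})$. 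Grouping by $\ell$ and using the crude tail estimate
$$
\sum_{\ell=1}^{j}\binom{j}{\ell}(k-1)^{\ell}\,\Gamma_\ell(\overline{\sF}) \ \le\ k^{j}\,\max_{1\le i\le\log_kF}\Gamma_i(\overline{\sF})
$$
(the natural $k$-ary analogue of the binomial step in the binary proof) produces $A>0$ under the condition that yields \eqref{bound.k_G}.

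\textbf{Main obstacle.} The principal subtlety arises only in the $\Gamma$-case when $k$ is composite: for an exponent $t_s$ with $\gcd(t_s,k)>1$ the map $\phi_s^{t_s}$ is no longer a bijection onto the full group of $k$-th roots of unity, hence $\phi_s^{t_s}\notin\P$, and the desired inequality $|\sum_{i}\prod_{s}\phi_s(e_{i,n_s})^{t_s}|\le\Gamma_\ell(\overline{\sF})$ is not immediate from Definition~\ref{def.ccm_k_alt}. This issue does not occur for $k=2$ nor, more generally, for prime $k$. For composite $k$ one must either restrict the statement to prime alphabets or note that such degenerate characters factor through a proper subgroup of roots of unity, which allows the corresponding inner sums to be absorbed into a lower-order correlation estimate without affecting the final bound.
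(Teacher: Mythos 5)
Your argument for \eqref{bound.k_g} is essentially the paper's: both identify the number $A$ of sequences realizing a length-$j$ specification with the count $g(\overline{\sF},W,F,(0,\ldots,0),(n_1,\ldots,n_j))$ from Definition~\ref{def.ccm_k} and deduce $A\ge F/k^j-\gamma_j(\overline{\sF})$ (the paper's displayed ``$A\ge F/k^j+\gamma_j(\overline{\sF})$'' is a sign typo that you implicitly correct), then let $j$ range up to $\log_k F$. For \eqref{bound.k_G} the paper says only that ``the proof is done similarly,'' whereas you supply the actual content: the root-of-unity indicator, the diagonal term $F/k^j$, and the binomial tail estimate, i.e.\ the direct $k$-ary analogue of the binary argument being generalized. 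The obstacle you flag there is genuine and unaddressed in the paper: for composite $k$ and $\gcd(t_s,k)>1$ the map $\phi_s^{t_s}$ is not a bijection onto the $k$-th roots of unity, so the off-diagonal sums are not literally majorized by $\Gamma_\ell(\overline{\sF})$ as defined in Definition~\ref{def.ccm_k_alt}; restricting to prime $k$ or factoring the degenerate characters through the subgroup of $(k/\gcd(t_s,k))$-th roots of unity are the standard remedies. One caution on your reduction for non-distinct dual sequences: the branch in which the specification forces $A=0$ does not ``handle'' that case but rather exhibits a specification that cannot be realized, which would cap $C(\sF)$ at $j-1$ regardless of the right-hand side; this is a latent defect of the theorem as stated (the admissibility constraint $d_i\ne d_j$ in Definition~\ref{def.ccm_k} excludes exactly these tuples from $\gamma_j$), which the paper does not acknowledge at all, so it should not be counted against you.
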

\begin{proof}
Assume that for an integer $j$ a specification 
\be \label{eqn.spec_k}
e_{k,n_1}=b_1,e_{k,n_2}=b_2,\ldots ,e_{k,n_j}=b_j
\ee
for $B = (b_1,b_2, \ldots, b_j) \in \{a_1,a_2, \ldots , a_k\}^j$ occurs in the family $\sF$ for some $k \in \{1,2,\ldots ,F\}$.
Let $A$ denotes the number of sequences in $\sF$ satisfying (\ref{eqn.spec_k}).
By the definition of cross-correlation we have
\be \nn
\begin{array}{lcl}
\gamma_j(\overline{\sF}) &=& \dd \max_{W,M,D,I}\left| g(\overline{\sF},W,M,D,I) - \frac{M}{k^j} \right|\\[.4cm]
& \geq &\dd \left| g(\overline{\sF},B,F,(0,0,\ldots,0),(n_1,\ldots,n_j)) - \frac{F}{k^j} \right|\\[.4cm]
& \geq &\dd \left| A - \frac{F}{k^j} \right|\\
\end{array}
\ee
Hence, we obtain that $$A \geq \frac{F}{k^j} + \gamma_j(\overline{\sF}).$$
If $j < \log_k{F} - \log_k{\gamma_j(\overline{\sF})}$ then  there exists at least one sequence in $\sF$ satisfying (\ref{eqn.spec_k}). 
Therefore for all integers $j \geq 0$ satisfying $$ j < \log_2{F} - \log_k{\max_{1 \leq \ell \leq \log_k{F}}{\gamma_{\ell}(\overline{\sF})}}$$
we have $A > 0$ which completes the proof of (\ref{bound.k_g}). And, the proof of (\ref{bound.k_G}) is done similarly.
\end{proof}


\section{A large family of $k$-symbols sequences with low cross correlation and high family complexity} \label{sec.new_family_k}

In this section we extend the family of binary sequences that we have presented in Section \ref{sec.new_family} to the $k$-symbols alphabet. We prove the following generalization of Theorem \ref{thm.ext_2}. Its proof is similar to proof of Theorem \ref{thm.ext_2}, see also \cite[Theorem 3]{MS1997}.

\begin{theorem} \label{thm.ext_k}
Let $d$ and $p > 2$ be distinct prime numbers and $k$ be a positive integer such that $$\gcd(k,\frac{p^d-1}{p-1}) = 1.$$ Let $f_\beta$ be an irreducible polynomial  of degree $d$ over the finite field $\F_p$ such that
$$f_\beta(x) = (x - \beta)(x - \beta^p) \cdots  (x - \beta^{p^{d-1}})$$
for an element $\beta \in \F_{p^d}$.
Let a family $\sF$ of $k$-ary sequences be defined as 
$${\sF}=\left\{ \left(\chi({f_\beta(n)})\right)_{n=1}^{p-1} : \beta \in \F_{p^d}\backslash \F_p  \mbox{ and } \Tr(\beta) = 0 \right\}$$ for some character $\chi$ of order $k$.
Then we have
\be \label{eqn.ccm_ext_k} 
\gamma_\ell({\sF})\ll \ell p^{1/2}\log p
\ee
for each integer $l \in \{2,3,\ldots , p-1\}$ and
\be \label{eqn.f_ext_k} 
C({\sF})\ge \dd (\frac{d}{2}-1)\log_2{p} - \log_2{((d-1)\log_2{p})}.
\ee The family size equals $$ F = \frac{p^d-p}{dp} .$$
\end{theorem}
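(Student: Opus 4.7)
The proof will follow the template of Theorem~\ref{thm.ext_2}, split into three parallel tasks: family size, cross-correlation bound, and family complexity.

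(1) \emph{Family size.} Since $d$ and $p$ are distinct primes, the only proper subfield of $\F_{p^d}$ is $\F_p$. The $\F_p$-linear trace map has kernel of size $p^{d-1}$, and on $\F_p$ itself $\Tr(a)=da$ vanishes only at $a=0$; thus there are exactly $p^{d-1}-1$ trace-zero elements in $\F_{p^d}\setminus\F_p$, partitioned by Frobenius into orbits of length $d$ (primality of $d$ rules out any proper stabilizer). This yields $F=(p^{d-1}-1)/d=(p^d-p)/(dp)$.

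(2) \emph{Cross-correlation.} Identifying the alphabet with the $k$-th roots of unity so that $\chi$ is a multiplicative character of order $k$ on $\F_p^*$, I write $\chi(f_\beta(n))=\chi(N(n-\beta))=\tilde\chi(n-\beta)$, where $\tilde\chi=\chi\circ N_{\F_{p^d}/\F_p}$. The coprimality hypothesis $\gcd(k,(p^d-1)/(p-1))=1$ makes $\tilde\chi$ a non-principal multiplicative character of $\F_{p^d}^*$ of order exactly $k$. For any non-trivial exponent tuple $(t_1,\ldots,t_\ell)$, the relevant cross-correlation sum rewrites as $\sum_{n=1}^M \chi(F(n))$, where
\[
F(X)=\prod_{j=1}^\ell f_{\beta_{i_j}}(X+d_j)^{t_j}\in\F_p[X].
\]
By the admissibility condition of Definition~\ref{def.ccm_k} (namely $d_j\neq d_{j'}$ whenever $\beta_{i_j}=\beta_{i_{j'}}$), the shifted points $\beta_{i_j}-d_j$ are pairwise distinct, so the irreducible factors of $F$ are distinct and $F$ is not a perfect $k$-th power. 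Weil's theorem bounds the complete sum on $\F_p$ by $O(d\ell\,p^{1/2})$, and P\'olya--Vinogradov completion upgrades this to the partial bound $O(d\ell\,p^{1/2}\log p)$. Character orthogonality on $\Z/k\Z$ translates this $\Gamma_\ell$-style estimate into the counting bound \eqref{eqn.ccm_ext_k} (with $d$ and $k$ absorbed into the implicit constant).

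(3) \emph{Family complexity.} I apply Theorem~\ref{thm.f_ccm_k} to the dual family $\overline\sF$, whose sequences are $(\chi(f_\beta(n)))_\beta$ indexed by $n\in\{1,\ldots,p-1\}$. Imitating the $\Phi^\circ$-analysis of Theorem~\ref{thm.ext_2}, the central quantity is
\[
\Biggl| \sum_{\substack{\beta\in\F_{p^d}\setminus\F_p\\ \Tr(\beta)=0}} \prod_{j=1}^\ell \tilde\chi^{t_j}(i_j-\beta) \Biggr|,
\]
which I handle via the Artin--Schreier substitution $\beta=\alpha^p-\alpha$; this parametrizes the trace-zero hyperplane $p$-to-$1$ over $\alpha\in\F_{p^d}$ and, after discarding the boundary stratum $\beta\in\F_p$ at cost $O(p^{\lfloor d/2\rfloor})$, reduces the estimate to a complete Weil-type sum over $\F_{p^d}$ bounded by $O(\ell p^{d/2}/d)$. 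Inserting $F=(p^{d-1}-1)/d$ and this maximum into Theorem~\ref{thm.f_ccm_k} produces \eqref{eqn.f_ext_k}.

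\emph{Main obstacle.} The crux is verifying that $F(X)$ in (2) and the analogous $\F_{p^d}$-polynomial in (3) are not perfect $k$-th powers, so that Weil's inequality is effective with the advertised exponent. This rests on two ingredients: the coprimality hypothesis $\gcd(k,(p^d-1)/(p-1))=1$, which prevents $\tilde\chi^t$ from becoming principal for $0<t<k$, and the distinctness of the shifted irreducibles $f_{\beta_{i_j}}(X+d_j)$, which follows from the cross-correlation admissibility condition exactly as in the proof of Theorem~\ref{thm.ext}.
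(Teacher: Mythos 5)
Your proposal is correct and follows essentially the same route as the paper: orthogonality over the $k$-th roots of unity to isolate the main term $M/k^\ell$, rewriting $\chi(f_\beta(n))$ as a norm-composed character of $\F_{p^d}$ so that the coprimality hypothesis $\gcd(k,(p^d-1)/(p-1))=1$ rules out a perfect $k$-th power, Weil's bound, and then the dual-family $\gamma^\circ$ estimate fed into Theorem~\ref{thm.f_ccm_k}. The only (cosmetic) divergence is that you handle the trace-zero condition in step (3) via the Artin--Schreier substitution $\beta=\alpha^p-\alpha$, as the paper does in Theorem~\ref{thm.ext_2}, whereas its proof of this theorem passes directly to a sum over all of $\F_{p^d}\setminus\F_p$ with a $1/(dp)$ factor; your version is the cleaner justification of the same bound.
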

\begin{proof}
Let $a$ be a $k$-th root of unity and $S(a,m)$ denote 
$$ S(a,m) = \frac{1}{k}\sum_{t=1}^{k}{\overline{a}\chi(m)^t}.$$
Then we have
$$ S(a,m) = \left\lbrace \begin{array}{ll}1 & \ifc \chi(m) = a,\\
0 & \ifc \chi(m) \neq a.
\end{array}
\right.$$
Now we estimate $g(\sF,W,M,D,I)$ as follows.
\be \nn
\begin{array}{lcl}
g(\sF,W,M,D,I) &=& \dd| \{ n : 1 \leq n \leq M, (e_{i_1,n+d_1}, \ldots, e_{i_\ell,n+d_\ell}) = W \} | \\[.1cm]
&=&\dd \sum_{n=1}^{M}{\prod_{j=1}^{\ell}{S(a_j,f_{i_j}(n+d_j))}}\\[.4cm]
&=& \dd \sum_{n=1}^{M}{\prod_{j=1}^{\ell}{\frac{1}{k} \sum_{t_j = 1}^{k}{(\overline{a_j} \chi(f_{i_j}(n+d_j)))^{t_j}}}}\\[.4cm]
&=& \dd \frac{1}{k^\ell}\sum_{t_1 = 1}^{k} \cdots \sum_{t_\ell = 1}^{k}\overline{a_1^{t_1}} \cdots \overline{a_\ell^{t_\ell}} \sum_{n=1}^{M}{\chi(f_{i_1}(n+d_1))^{t_1} \cdots \chi(f_{i_\ell}(n+d_\ell))^{t_\ell}}\\[.4cm]
&=& \dd \frac{M}{k^\ell} +  \frac{1}{k^\ell}\sum_{t_1 = 1}^{k-1} \cdots \sum_{t_\ell = 1}^{k-1}\overline{a_1^{t_1} \cdots a_\ell^{t_\ell}} \sum_{n=1}^{M}{\chi(f_{i_1}(n+d_1)^{t_1} \cdots f_{i_\ell}(n+d_\ell)^{t_\ell})}\\[.4cm]
& \leq & \dd \frac{M}{k^\ell} +  \frac{1}{k^\ell}\sum_{t_1 = 1}^{k-1} \cdots \sum_{t_\ell = 1}^{k-1} \left| \sum_{n=1}^{M}{\chi(f_{i_1}(n+d_1)^{t_1} \cdots f_{i_\ell}(n+d_\ell)^{t_\ell})} \right| \\[.4cm]
\end{array} 
\ee
Now consider the polynomial  
$$f(n) = f_{i_1}(n+d_1)^{t_1} \cdots f_{i_\ell}(n+d_\ell)^{t_\ell}.$$
We will show that it is not a $k$-th power of a polynomial over $\F_p$. 
As $n+d_j<p$ for $j=1,2,\ldots,\ell$, we can write $f$ as follows 
$$f(n) = (n+d_1-\beta_{i_1})^{t_1\frac{p^d-1}{p-1}} \cdots (n+d_\ell-\beta_{i_\ell})^{t_{\ell}\frac{p^d-1}{p-1}}$$
for some $\beta_{i_1} , \ldots, \beta_{i_\ell} \in \F_{p^d} \backslash \F_p$ and $\Tr(\beta_{i_j}) = 0$ for all $j \in \{1,2, \ldots , \ell\}$. 
Then, the linear terms $(n+d_1-\beta_{i_1}), \ldots ,(n+d_\ell-\beta_{i_\ell})$ are distinct to each other. So it is enough to show that the power of each component is not divisible by $k$. And this holds as we have $t_1, \ldots , t_\ell < k$ and $\gcd(k,\frac{p^d-1}{p-1}) = 1$. 

By applying Weil Theorem to the inner character sum we obtain 
$$\left| g(\sF,W,M,D,I) - \frac{M}{k^\ell} \right| \ll  \ell p^{1/2} \log(p),$$
and by substituting this into Definition \ref{def.ccm_k} we complete the proof of (\ref{eqn.ccm_ext_k}). 

Next we prove the bound (\ref{eqn.f_ext_k}). Before this we note that family size equals the number of irreducible polynomials of degree $d$ over $\F_p$ having zero trace since they all produce a distinct sequence in $\sF$. Note that there are $\frac{p^d-p}{p}$ distinct elements in $\F_{p^d} \backslash \F_p$ having zero trace, and $d$ of them combines into an irreducible polynomial over $\F_p$. So we have  
\be \label{eqn_fam_k}
F = \frac{p^d-p}{dp}.
\ee
Now we estimate
\be \nn
\begin{array}{lcl}
g(\overline{\sF},W,F,0,I) &=& \dd| \{ n : 1 \leq n \leq F, (\overline{e}_{i_1,n+d_1}, \ldots, \overline{e}_{i_\ell,n+d_\ell}) = W \} |\\
&=& \dd \sum_{n=1}^{F}{\prod_{j=1}^{\ell}{S(a_j,\overline{f}_{i_j}(n))}} = \sum_{n=1}^{F}{\prod_{j=1}^{\ell}{S(a_j,f_{n}(i_j))}}\\
\end{array}
\ee
Similar to the proof of (\ref{eqn.ccm_ext_k}) we have
\be \nn
\begin{array}{lcl}
\dd \sum_{n=1}^{F}{\prod_{j=1}^{\ell}{S(a_j,f_{n}(i_j))}}  & \leq & \dd \frac{F}{k^\ell} +  \frac{1}{k^\ell}\sum_{t_1 = 1}^{k-1} \cdots \sum_{t_\ell = 1}^{k-1} \left| \sum_{n=1}^{F}{\chi((i_1 - \beta_n)^{t_1\frac{p^d-1}{p-1}} \cdots (i_\ell - \beta_n)^{t_\ell\frac{p^d-1}{p-1}})} \right| \\[.4cm]
& \leq &\dd \frac{F}{k^\ell} +  \frac{1}{k^\ell}\sum_{t_1 = 1}^{k-1} \cdots \sum_{t_\ell = 1}^{k-1} \left| \sum_{\substack{\beta \in \F_{p^d} \backslash \F_p,\Tr(\beta)=0\\ nonconjugate}}^{F}{\chi((i_1 - \beta)^{t_1\frac{p^d-1}{p-1}} \cdots (i_\ell - \beta)^{t_\ell\frac{p^d-1}{p-1}})} \right| \\[.4cm]
& \leq &\dd \frac{F}{k^\ell} +  \frac{1}{k^\ell}\sum_{t_1 = 1}^{k-1} \cdots \sum_{t_\ell = 1}^{k-1} \frac{1}{dp}\left| \sum_{\beta \in \F_{p^d} \backslash \F_p}^{F}{\chi((i_1 - \beta)^{t_1\frac{p^d-1}{p-1}} \cdots (i_\ell - \beta)^{t_\ell\frac{p^d-1}{p-1}})} \right|. \\[.4cm]
\end{array}
\ee
Since $\Tr(\beta) = 0$ and $\gcd(k,\frac{p^d-1}{p-1}) = 1$, the polynomial inside the character sum is not a $k$-th power. Thus by Weil Theorem we have 
$$\left| g(\overline{\sF},W,F,0,I) - \frac{F}{k^\ell} \right| \ll \frac{1}{dp}[(\ell p-1)p^{d/2}+p],$$
and so by Definition \ref{def.ccm_k} we have 
\be \label{eqn.gamma_0}
\gamma^\circ(\overline{\sF}) \ll  \frac{1}{dp}[(\ell p-1)p^{d/2}+p].
\ee
Therefore by using (\ref{eqn_fam_k}) and (\ref{eqn.gamma_0}) we obtain that
$$C(\sF) \geq  \dd \log_2\frac{F}{\max_{1 \leq \ell \leq \log_2{F}}{\gamma^\circ_{\ell}(\overline{\sF})}} \geq \dd (\frac{d}{2}-1)\log_2{p} - \log_2{((d-1)\log_2{p})}$$
as desired.
\end{proof}

\section{Conclusion}
Pseudorandom sequences are used in many practical areas and their quality is decided by statistical test packages  as well as by proved results on certain measures. 
In addition to that,  a large family of good pseudorandom sequences in terms of several directions is required in some applications. 
In this paper we studied two such measures the $f$-complexity, and the cross-correlation measure of order $\ell$ for family of sequences on binary and $k$-symbols alphabet. 
We considered two families of binary sequences of Legendre-symbols  
$$\sF_1=\left\{\left(\frac{f_i(n)}{p}\right)_{n=1}^{p-1} : i=1,\ldots,p-1 \right\}$$ for irreducible polynomials 
$f_i(x) = x^d + a_2i^2x^{d-2} + a_3i^{3}x^{d-3} + \cdots + a_{d-2}i^{d-2}x^{2} + a_di^d$
and 
$${\sF_2}=\left\{ \left(\frac{f(n)}{p}\right)_{n=1}^{p-1} : f \mbox{ is irreducible of degree } d \mbox{ over } \F_p\right\}$$
for a positive integer $d$.
We show that the families $\sF_1$ and $\sF_2$ have
both a large family complexity and a small cross-correlation measure up to a rather large order. 
Then we proved the analog results for the family of sequences on $k$-symbols alphabet, and  constructed a good family of $k$-symbols sequences.
	
\section*{Acknowledgment}
This study was initiated in 2020 and supported by the Scientific and Technological Research	Council of Turkey (TÜBİTAK) under Project No: \mbox{116R026}. The study has been updated in 2022 and since then Oğuz Yayla has been supported by Middle East Technical University - Scientific Research Projects Coordination Unit under grant number 10795.

\ifCLASSOPTIONcaptionsoff
  \newpage
\fi



\bibliographystyle{IEEEtranS}
\bibliography{family_complexity}
%



%







\end{document}